\newtheorem{theorem}{Theorem}[section]
\newtheorem{lemma}[theorem]{Lemma}
\newtheorem{corollary}[theorem]{Corollary}
\newcommand{\diam}{\mathsf{diam}}
\newcommand{\topp}{\mathsf{top}}
\newcommand{\bottom}{\mathsf{bottom}}
\newcommand{\trees}{\mathsf{Trees}}
\newcommand{\planar}{\mathsf{Planar}}
\newcommand{\len}{\mathsf{len}}
\newcommand{\dil}{\mathsf{dil}}
\renewcommand{\phi}{\varphi}
\title{Optimal stochastic planarization}
\author{Anastasios Sidiropoulos\\
{\small Toyota Technological Institute at Chicago}\\
{\small tasos@ttic.edu}
}
\begin{document}

\setcounter{page}{0}
\maketitle

\begin{abstract}
It has been shown by Indyk and Sidiropoulos \cite{indyk_genus} that any graph of genus $g>0$ can be stochastically embedded into a distribution over planar graphs with distortion $2^{O(g)}$.
This bound was later improved to $O(g^2)$ by Borradaile, Lee and Sidiropoulos \cite{BLS09}.
We give an embedding with distortion $O(\log g)$, which is asymptotically optimal.

Apart from the improved distortion, another advantage of our embedding is that it can be computed in polynomial time.
In contrast, the algorithm of \cite{BLS09} requires solving an NP-hard problem.

Our result implies in particular a reduction for a large class of geometric optimization problems from instances on genus-$g$ graphs, to corresponding ones on planar graphs, with a $O(\log g)$ loss factor in the approximation guarantee.
\end{abstract}

\thispagestyle{empty}
\newpage

\section{Introduction}
\label{sec:intro}

Planar graphs constitute an important class of combinatorial
structures, since they can be used to model a wide variety of natural objects.
At the same time, they have properties that give rise to improved algorithmic solutions for numerous graph problems, if one restricts the set of possible inputs to planar graphs (see, for example \cite{Baker-planar}).

One natural generalization of planarity involves the genus of a graph.  Informally, a graph has genus $g$, for some
$g\geq 0$, if it can be drawn without any crossings on the surface of
a sphere with $g$ additional handles (see Section \ref{sec:prelims}).  For example, a planar graph has genus $0$, and a
graph that can be drawn on a torus has genus at most $1$.

In a way, the genus of a graph quantifies how far it is from
being planar.  Because of their similarities to planar graphs, graphs
of small genus usually exhibit nice algorithmic properties.  More
precisely, algorithms for planar graphs can usually be extended to
graphs of bounded genus, with a small loss in efficiency
or quality of the solution (e.g.~\cite{CEN09}). Unfortunately,
many such extensions are complicated and based on ad-hoc techniques.

Inspired by Bartal's stochastic embedding of general metrics into
trees \cite{Bar96}, Indyk and Sidiropoulos~\cite{indyk_genus} showed that every metric on a graph of genus $g$ can be stochastically embedded into a planar graph with distortion at most exponential in $g$
(see Section \ref{sec:prelims} for a formal definition of
stochastic embeddings).
Since the distortion
measures the ability of the probabilistic mapping to preserve
metric properties of the original space, it is desirable to
make this quantity as small as possible.
The above bound was later improved by Borradaile, Lee, and Sidiropoulos \cite{BLS09}, who obtained an embedding with distortion polynomial in $g$.
In the present paper, we give an embedding with distortion $O(\log g)$, which matches the $\Omega(\log g)$ lower bound from \cite{BLS09}.
The statement of our main result follows.

\begin{theorem}[Stochastic planarization]\label{thm:main}
Any graph $G$ of genus $g$, admits a stochastic embedding into a distribution over planar graphs, with distortion $O(\log g)$.
Moreover, given a drawing of $G$ into a genus-$g$ surface, the embedding can be computed in polynomial time.
\end{theorem}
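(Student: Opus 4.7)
My plan is to refine the stochastic-cutting framework of \cite{BLS09}, which finds a flat family of $O(g)$ non-contractible cycles that planarize $G$ and randomly shifts each independently. Their $O(g^2)$ bound arises because each of the $\Theta(g)$ cycles can contribute an independent $\Theta(g)$ expected stretch in the worst case. To reach the optimal $O(\log g)$, I would replace the flat cut system by a hierarchical one of depth $O(\log g)$, and then arrange the randomness so that the expected stretch of any shortest path decomposes additively rather than multiplicatively across levels.

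The construction proceeds in two stages. First, I would establish a ``short balanced separator'' lemma for surface-embedded graphs: on any genus-$g$ surface carrying $G$, there is a simple closed curve $\gamma$ whose removal (with appropriate capping by disks) produces subsurfaces of genus at most $\tfrac{2}{3} g$ each, and whose length is controlled by the relevant local scale of $G$. Such separators can be computed in polynomial time using known algorithms for shortest separating and non-separating cycles on surface-embedded graphs. Recursing on the two pieces yields a decomposition tree of depth $O(\log g)$ whose leaves are planar. Second, at each internal node I would apply a Bartal/CKR/FRT-style random shift to a tubular neighborhood of the associated separator: draw a random radius from an interval proportional to the node's scale, and cut along the set of points at that distance from $\gamma$. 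Conditioned on cuts higher in the tree, any shortest path $P$ is cut at this level with probability proportional to $\len(P)$ divided by the node's scale, contributing an additive stretch of order the scale.

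The main obstacle is the stretch analysis: a naive composition of $O(\log g)$ stochastic embeddings with $O(1)$ distortion each would yield $2^{O(\log g)} = g^{O(1)}$ rather than $O(\log g)$. To avoid this, the entire collection of random shifts must be treated as a single random planar embedding, and the expected stretch of a fixed pair $(u,v)$ must be bounded by a sum of $O(\log g)$ additive contributions, one per level, instead of a product. The key inequality at each level is that the conditional expected stretch charged to $d_G(u,v)$ is an $O(1)$ fraction of $d_G(u,v)$ coming from that level alone; summed over the $O(\log g)$ levels of the decomposition this telescopes to the target distortion, matching the $\Omega(\log g)$ lower bound of \cite{BLS09}. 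Verifying this additive structure---ensuring that cuts at different scales do not amplify one another, and that balanced short separators exist uniformly at every recursion step---is the crux. The polynomial-time claim then follows from the polynomial-time computability of short separating cycles together with the fact that each random shift is drawn from a polynomially-bounded domain.
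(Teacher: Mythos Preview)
Your proposal has a genuine gap at exactly the point you flag as ``the crux'': you give no mechanism by which the expected stretch becomes \emph{additive} across the $O(\log g)$ genus-reduction levels. In a random-shift cut along a separator $\gamma$ with radius $R$, the probability that a fixed $u$--$v$ shortest path is severed is indeed $O(d_G(u,v)/R)$, but the penalty when it is severed is governed by the detour around $\gamma$ in the cut graph, which has nothing to do with $R$; it is of order $\len(\gamma)$ or the diameter of the current piece. For the per-level contribution to be $O(d_G(u,v))$ you would need $R$ comparable to that detour, but you never establish (and in general it is false) that a genus-balancing separating cycle can be chosen with length controlled by any scale that is simultaneously usable as a shift radius at that level. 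Absent such control, the levels interact exactly as in the naive composition you warn against: after the level-$1$ cut the distances in the resulting graph are already inflated, and the level-$2$ analysis sees those inflated distances, yielding a product rather than a sum. The FRT additivity you are invoking relies on a hierarchy indexed by \emph{metric scales} $2^i$, where the padding probability and the cluster diameter are both tied to $2^i$; your hierarchy is indexed by \emph{genus}, and there is no a priori coupling between genus level and any metric scale.

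The paper's proof takes a completely different route and avoids recursion on genus altogether. It planarizes $G$ in a single shot by removing a cut graph $H$ that is the union of $4g$ shortest paths from a common basepoint (the Erickson--Whittlesey greedy homotopy generators). The $O(\log g)$ then comes not from a depth-$O(\log g)$ recursion but from a new result (Theorem~\ref{thm:tree_embed}): the metric induced on any union of $k$ shortest paths with a common endpoint embeds stochastically into trees with distortion $O(\log k)$. This is proved via ``alternating partitions'' that interleave a CKR-style random choice over the $k$ paths (the source of the $\sum_{s\le k} 1/s = O(\log k)$ harmonic sum) with distance-from-root slicing, and the analysis carefully shows the two kinds of cuts contribute additively. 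The Peeling Lemma then glues the tree embedding of $H$ to the planar remainder with only $O(1)$ further loss, because $H$ has dilation $1$ in the augmented graph. In short, the logarithm in the paper counts \emph{paths}, not recursion depth, and the additivity is engineered inside the tree-embedding step rather than across genus-reduction levels.
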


We note that Theorem \ref{thm:main} can be equivalently stated for compact 2-dimensional simplicial manifolds,
i.e.~continuous spaces obtained by glueing together finitely many triangles, with every point having a neighborhood homeomorphic to a disk.
The shortest-path metrics of genus-$g$ graphs, are precisely the metrics supported on the 0-simplices of such genus-$g$ manifolds.
The result for these spaces can be obtained via a careful affine extension of our embedding over simplices.
Since our focus is on algorithmic applications, we omit the details, and restrict our discussion to the discrete case (i.e.~finite graphs).

\subsection{Our techniques}

In \cite{indyk_genus} it was shown that a graph of genus $g$ can be stochastically embedded into a distribution over graphs of genus $g-1$, with constant distortion.
Repeating this $g$ times results in a planar graph, but yields distortion exponential in $g$.
The improvement of \cite{BLS09} was obtained by giving an algorithm that removes all handles at once.
The main technical tool used to achieve this was the Peeling Lemma from \cite{LS08}.
The idea is that given a graph $G$ of genus $g$, one can find a subgraph $H\subset G$, which we refer to as the \emph{cut graph}, such that (i) $G\setminus H$ is planar, (ii) $H$ has dilation $g^{O(1)}$, and (iii) $H$ can be stochastically embedded into a planar graph.
The resulting distortion of the embedding produced via the Peeling Lemma is proportional to the dilation of $H$, and therefore polynomial in $g$.

It was further shown in \cite{BLS09} that any cut graph has dilation $\Omega(g)$, imposing a limitation on their technique.
We overcome this barrier as follows.
We first find a cut graph consisting of $O(g)$ shortest paths with a common end-point.
These paths are obtained from the generators of the fundamental group of the underlying surface, due to Erickson and Whittlesey \cite{greedy_loops_erickson}.
In the heart of our analysis, we show how to embed a collection of shortest paths with a common end-point, into a random tree with distortion $O(\log g)$.
This result can be viewed as a generalization of the tree-embedding theorem due to Fakcharoenphol, Rao, and Talwar \cite{FRT03}, who showed that any $n$-point metric space admits a stochastic embedding into a tree with distortion $O(\log n)$.

This connection with tree embeddings seems surprising, since planar graphs appear to be significantly more complicated topologically.
For instance, even embedding the $n\times n$ grid into a random tree, requires distortion $\Omega(\log n)$, due to a lower bound of Alon, Karp, Peleg, and West \cite{AKPW}.
Gupta, Newman, Rabinovich, and Sinclair \cite{GNRS} have shown that the same lower bound of $\Omega(\log n)$ holds even for embedding very simple classes of planar graphs into trees, such as series-parallel graphs (i.e.~even for planar graphs of treewidth 2).

Our tree-embedding result is obtained by combining the approach from \cite{FRT03} with the algorithm of Lee and Sidiropoulos \cite{LS2010} for computing random partitions for graphs of small genus.
We remark however that the algorithm of \cite{FRT03} computes an embedding into an \emph{ultrametric}\footnote{A metric space $(X,d)$ where for every $x,y,z\in X$, $d(x,y)\leq \max\{d(x,z),d(z,y)\}$.}, and it can be shown that even a single shortest path cannot be embedded into a random ultrametric with distortion better than $\Omega(\log n)$.
We therefore need new ideas to obtain distortion $O(\log g)$.
One key ingredient towards this is a new random decomposition scheme, which we refer to as \emph{alternating partitions}, and which takes into account the topology of the paths that we wish to partition.
These techniques might be of independent interest.

\subsection{Applications}

\paragraph{Optimization}
As in the case of stochastic embeddings of arbitrary metrics into
trees \cite{Bar96}, we obtain a general
reduction from a class of optimization problems on
genus-$g$ graphs, to their restriction on planar graphs.
We now state precisely the reduction.
Let $V$ be a set,
${\cal I}\subset \mathbb{R}_+^{V\times V}$ a set of non-negative vectors corresponding all feasible solutions for a minimization problem, and $c \in \mathbb{R}_+^{V\times V}$.
Then, we define the \emph{linear minimization problem} $({\cal I}, c)$ to be the computational problem where we are given a graph $G=(V,E)$, and we are asked to find $s\in {\cal I}$, minimizing
\[
\sum_{\{u,v\}\in V\times V} c_{u,v} \cdot s_{u,v} \cdot d(u,v)
\]

Observe that this definition captures a very general class of problems.
For example, MST can be encoded by letting ${\cal I}$ be the set of indicator vectors of the edges of all spanning trees on $V$, and $c$ the all-ones vector.
Similarly, one can easily encode problems such as TSP, Facility-Location, $k$-Server, Bi-Chromatic Matching, etc.

We can now state an immediate Corollary of our embedding result.

\begin{corollary}\label{cor:opt}
Let $\Pi=({\cal I}, c)$ be a linear minimization problem.  If there exists a polynomial-time $\alpha$-approximation algorithm for $\Pi$ on planar graphs, then there exists a randomized polynomial-time $O(\alpha\cdot \log g)$-approximation algorithm for $\Pi$ on graphs of genus $g$.
\end{corollary}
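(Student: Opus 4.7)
The plan is to use the standard reduction via stochastic embeddings. Given an instance of $\Pi$ on a genus-$g$ graph $G=(V,E)$ with shortest-path metric $d_G$, I would first invoke Theorem \ref{thm:main} to sample, in polynomial time, a planar graph $H$ on the same vertex set from the stochastic embedding. Recall that the defining properties of a stochastic embedding are (i) non-contraction, i.e.~$d_H(u,v) \geq d_G(u,v)$ almost surely for every $u,v \in V$, and (ii) bounded expected expansion, i.e.~$\mathbb{E}[d_H(u,v)] \leq O(\log g) \cdot d_G(u,v)$. I would then run the assumed polynomial-time $\alpha$-approximation algorithm for $\Pi$ on $H$ (equipped with metric $d_H$) to obtain a feasible solution $s^H \in {\cal I}$, and return $s^H$ as the answer on $G$.

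For the analysis, denote $\mathrm{cost}_d(s) = \sum_{u,v} c_{u,v} \cdot s_{u,v} \cdot d(u,v)$ and let $s^* \in {\cal I}$ be an optimal solution on $(G,d_G)$. Since $c$ and $s^H$ have non-negative entries, property (i) yields $\mathrm{cost}_{d_G}(s^H) \leq \mathrm{cost}_{d_H}(s^H)$ for every realization of $H$, while the assumed approximation guarantee on $H$ gives $\mathrm{cost}_{d_H}(s^H) \leq \alpha \cdot \mathrm{cost}_{d_H}(s^*)$. Taking expectation over $H$ and applying linearity of expectation termwise to $s^*$ via property (ii), I would conclude
\[
\mathbb{E}[\mathrm{cost}_{d_G}(s^H)] \;\leq\; \alpha \cdot \mathbb{E}[\mathrm{cost}_{d_H}(s^*)] \;\leq\; O(\alpha \log g) \cdot \mathrm{cost}_{d_G}(s^*),
\]
which is exactly the claimed expected approximation ratio. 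Polynomial running time is immediate from combining Theorem \ref{thm:main} with the assumed planar approximation.

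Honestly, there is no serious obstacle once Theorem \ref{thm:main} is in hand; the only point deserving care is verifying that a solution produced on $H$ is admissible on $G$. This is built into the paper's definition of a linear minimization problem, where ${\cal I} \subset \mathbb{R}_+^{V \times V}$ depends intrinsically on the vertex set $V$ and not on the edge set of either $G$ or $H$, so $s^H$ is automatically feasible for the instance on $G$. If one wishes to convert the expected approximation guarantee into a high-probability one, an application of Markov's inequality followed by independent repetitions of the sampling step suffices, at no asymptotic cost in the approximation factor.
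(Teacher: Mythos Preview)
Your argument is correct and is exactly the standard reduction the paper has in mind; indeed, the paper does not supply a proof at all but merely states the corollary as ``an immediate Corollary of our embedding result,'' and your write-up fills in precisely those details. The only minor imprecision is that the random planar target need not literally have vertex set $V$ but rather contains an isometric image $f(V)$; since $\mathcal{I}$ and $c$ are indexed by $V\times V$, one simply evaluates distances in the target via $d'(f(u),f(v))$, which is what your inequalities already do.
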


\paragraph{Metric embeddings}
One of the most intriguing open problems in the theory of metric embeddings is determining the optimal distortion for embedding planar graphs, and more generally graphs that exclude a fixed minor, into $L_1$ (see e.g.~\cite{LLR,GNRS, outerplanar,LS08}).
We remark that by the work of Linial, London, and Rabinovich \cite{LLR}, this distortion equals precisely the maximum multi-commodity max-flow/min-cut gap on these graphs, and is therefore of central importance in divide-and-conquer algorithms that are based on Sparsest-Cut \cite{LR,ARV04}.
Our embedding result immediately implies the following corollary.
The first proof of this statement was given in \cite{LS2010}, where it was derived via a fairly complicated argument.
\begin{corollary}\label{cor:emb}
If all planar graphs embed into $L_1$ with distortion at most $\alpha$, then all graphs of genus $g$ embed into $L_1$ with distortion $O(\alpha\cdot \log g)$.
\end{corollary}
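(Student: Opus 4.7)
The plan is to combine the stochastic planarization of Theorem \ref{thm:main} with the assumed $L_1$ embedding of planar graphs, and then derandomize using the standard fact that $L_1$ is closed under weighted $\ell_1$-direct sums. This lets a stochastic embedding into a target class be converted into a single deterministic $L_1$ embedding with no additional loss.

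First, I would apply Theorem \ref{thm:main} to the genus-$g$ graph $G$ to obtain a distribution $\mu$ over planar graphs together with non-contracting maps $\phi_H \colon V(G) \to V(H)$ satisfying
\[
\mathbb{E}_{H \sim \mu}\bigl[d_H(\phi_H(u), \phi_H(v))\bigr] \leq O(\log g) \cdot d_G(u,v)
\]
for every $u,v \in V(G)$. Since the embedding is computed in polynomial time, the support of $\mu$ is finite; let it be $\{H_1,\ldots,H_k\}$ with probabilities $p_1,\ldots,p_k$. By hypothesis, for each $H_i$ there is a non-contracting map $\psi_i \colon V(H_i) \to L_1$ with expansion at most $\alpha$. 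Setting $f_i = \psi_i \circ \phi_{H_i}$ yields, for every $u,v$,
\[
d_G(u,v) \;\leq\; \|f_i(u) - f_i(v)\|_1 \;\leq\; \alpha \cdot d_{H_i}(\phi_{H_i}(u), \phi_{H_i}(v)).
\]

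Next, I would derandomize by forming the weighted direct sum $F \colon V(G) \to L_1$ defined by $F(u) = (p_1 f_1(u),\ldots, p_k f_k(u))$, regarded as an element of the $\ell_1$-sum of $k$ copies of $L_1$, which is itself isometric to $L_1$. Then
\[
\|F(u) - F(v)\|_1 \;=\; \sum_{i=1}^k p_i \, \|f_i(u) - f_i(v)\|_1 \;=\; \mathbb{E}_{H \sim \mu}\bigl[\|f_H(u) - f_H(v)\|_1\bigr].
\]
Each $f_i$ is non-contracting, so the right-hand side is at least $d_G(u,v)$. Combining the two distortion estimates gives the upper bound $\|F(u) - F(v)\|_1 \leq \alpha \cdot \mathbb{E}_H[d_H(\phi_H(u),\phi_H(v))] \leq O(\alpha \log g) \cdot d_G(u,v)$, so $F$ has distortion $O(\alpha \log g)$ as required.

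There is no real obstacle: the content of the corollary is entirely in Theorem \ref{thm:main}, and the reduction above is routine. The one conceptual point worth highlighting is why the reduction does not lose an additional factor when passing from a stochastic to a deterministic embedding: this uses precisely that $L_1$ is stable under $\ell_1$-sums, a property that fails for the class of trees and is therefore the reason the analogous statement for tree embeddings cannot be derandomized for free.
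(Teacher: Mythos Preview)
Your proposal is correct and is exactly the routine argument the paper has in mind: the paper does not spell out a proof of Corollary~\ref{cor:emb} but simply says it ``immediately'' follows from Theorem~\ref{thm:main}, and the standard derandomization via weighted $\ell_1$-sums that you wrote out is precisely how one makes this immediate. There is nothing to add.
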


\subsection{Preliminaries}\label{sec:prelims}
Throughout the paper, we consider graphs $G=(V,E)$
with a non-negative length function $\len : E \to \mathbb R$.
For a pair $u,v \in V(G)$, we denote the length of the shortest path between $u$ and $v$ in $G$,
with the lengths of edges given by $\len$, by $d_G(u,v)$.
Unless otherwise stated, we restrict our attention
to finite graphs.

\paragraph{Graphs on surfaces}
Let us recall some notions from topological graph theory (an in-depth
exposition can be found in \cite{MoharT-book}).  A \emph{surface} is a
compact connected 2-dimensional manifold, without boundary.
For a graph $G$ we can
define a one-dimensional simplicial complex $C$ associated with $G$ as
follows: The $0$-cells of $C$ are the vertices of $G$, and for each
edge $\{u,v\}$ of $G$, there is a $1$-cell in $C$ connecting $u$ and
$v$.  A \emph{drawing} of $G$ on a surface $S$ is a continuous injection
$f:C\rightarrow V$.
The \emph{orientable genus} of a graph $G$ is the smallest integer
$g\geq 0$ such that $C$ can be drawn into a sphere with $g$ handles.
Note that a graph of genus $0$ is a planar graph.

\paragraph{Metric embeddings}
A mapping $f : X \to Y$ between two metric spaces $(X,d)$ and $(Y,d')$
is {\em non-contracting} if $d'(f(x),f(y)) \geq d(x,y)$ for all $x,y \in X$.
If $(X,d)$ is any finite metric space, and $\mathcal Y$
is a family of finite metric spaces, we say that {\em $(X,d)$ admits a stochastic $D$-embedding into $\mathcal Y$} if there exists a random metric space $(Y,d') \in \mathcal Y$ and a random
non-contracting mapping $f : X \to Y$ such that for every $x,y \in X$,
\begin{equation}
\label{eq:expansion}
\mathbb E\left[\vphantom{\bigoplus} d'(f(x),f(y))\right] \leq D \cdot d(x,y).
\end{equation}
The infimal $D$ such that \eqref{eq:expansion} holds is the {\em distortion of
the stochastic embedding.}
A detailed exposition of  results on metric embeddings can be found in \cite{I-survey} and \cite{Matousek-book}.

\subsection{Organization}
The rest of the paper is organized as follows.
In Section \ref{sec:homo} we show that in any graph of genus $g$, we can find a collection of $O(g)$ shortest paths with a common end-point, whose removal leaves a planar graph.
In Section \ref{sec:alter} we define alternating partitions for the metric space $M$ induced on these paths.
Using these partitions, we show in Section \ref{sec:tree} how to embed $M$ into a random tree, with distortion $O(\log g)$.
Finally, in Section \ref{sec:peeling} we combine this tree embedding with the Peeling Lemma, to obtain our main result.

\section{Homotopy generators}
\label{sec:homo}

Let $G$ be a genus-$g$ graph embedded into an orientable genus-$g$
surface $S$, and let $r$ be a vertex of $G$.
A \emph{system of loops
  with basepoint $r$} is a collection of $2g$ cycles
$C_1,\ldots,C_{2g}$ containing $r$, such that the complement of
$\bigcup_{i=1}^{2g}C_i$ in $S$ is homeomorphic to a disk.
Examples of systems of loops are depicted in figure \ref{fig:loops} (see also \cite{greedy_loops_erickson} for a detailed exposition).
The set of cycles in a system of loops generate the fundamental group $\pi_1(S, r)$.

\begin{figure}
\begin{center}
\scalebox{0.9}{\includegraphics{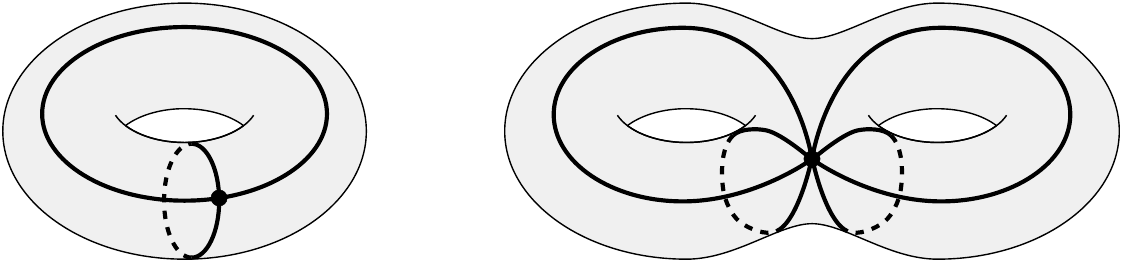}}
\caption{Example of systems of loops for surfaces of genus one and two.\label{fig:loops}}
\end{center}
\end{figure}

A system of loops is called \emph{optimal} if every $C_i$ is
the shortest cycle in its homotopy class.
Algorithms for computing optimal systems of loops have been given by Colin de Verdi\`{e}re and Lazarus \cite{optimal_loops_colin}
and by Erickson and Whittlesey \cite{greedy_loops_erickson}.
The later algorithm has the property that each cycle $C_i$ can be decomposed into either two shortest paths with common end-point $r$, or two such shortest paths, and an edge between the other two end-points.
We therefore have the following.

\begin{lemma}[Greedy homotopy generators \cite{greedy_loops_erickson}]\label{lem:homotopy}
Let $G$ be a graph embedded into an orientable surface $S$ of genus
$g$.  Then, there exists a subgraph $H$ of $G$ satisfying the following properties:
\begin{description}
\item{(i)}
The complement of
$H$ in $S$ is homeomorphic to a disk.
\item{(ii)}
There exists $r\in V(G)$, and a collection of $4g$ shortest-paths $Q_1,\ldots,Q_{4g}$ in $G$,
having $r$ as a common end-point, such that $V(H)=\bigcup_{i\in [4g]} V(Q_i)$.
\end{description}
\end{lemma}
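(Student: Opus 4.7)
The plan is to apply the greedy homotopy generators algorithm of Erickson and Whittlesey as a black box and then observe that its output has exactly the structure claimed in part (ii). Their algorithm first fixes the basepoint $r$ and computes a shortest-path tree $T$ rooted at $r$. Every non-tree edge $e=\{u,v\}$ then gives rise to a natural loop $\ell_e$ through $r$, namely the concatenation of the tree path from $r$ to $u$, the edge $e$, and the tree path from $v$ back to $r$. The algorithm selects $2g$ of these loops, call them $C_1,\ldots,C_{2g}$, whose union cuts the surface into a disk; this is precisely an optimal system of loops with basepoint $r$.

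With this in hand I would set $H := \bigcup_{i=1}^{2g} C_i$. Property (i) is then immediate from the defining property of a system of loops. For property (ii), write $e_i=\{u_i,v_i\}$ for the non-tree edge used to form $C_i$, and define $Q_{2i-1}$ to be the tree path in $T$ from $r$ to $u_i$ and $Q_{2i}$ the tree path in $T$ from $r$ to $v_i$. Because $T$ is a shortest-path tree, each $Q_j$ is a shortest path in $G$ having $r$ as an endpoint. The cycle $C_i$ is exactly $Q_{2i-1}\cup Q_{2i}\cup\{e_i\}$, and the endpoints of $e_i$ already lie on $Q_{2i-1}\cup Q_{2i}$, so at the level of vertex sets $V(C_i)=V(Q_{2i-1})\cup V(Q_{2i})$. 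Taking the union over $i$ yields $V(H)=\bigcup_{j=1}^{4g}V(Q_j)$, as required.

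The only genuinely nontrivial ingredient is the assertion that one can choose $2g$ such ``tree-plus-one-edge'' loops whose union has disk complement in $S$; this is the content of the main theorem of \cite{greedy_loops_erickson}, and I would cite it rather than reprove it, since revisiting the underlying tree-cotree argument and the shelling of dual faces is exactly the kind of topological bookkeeping one wants to avoid. A secondary point worth noting, though not strictly needed for the statement, is that each $C_i$ is in fact shortest in its homotopy class; this follows from the greedy selection rule and explains the ``optimal'' terminology used in the surrounding discussion. Once the Erickson--Whittlesey structure is invoked, the rest of the argument is a direct reorganization of tree paths and involves no further obstacle.
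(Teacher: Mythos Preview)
Your proposal is correct and matches the paper's own treatment: the paper does not give a formal proof but simply observes, immediately before stating the lemma, that each cycle in the Erickson--Whittlesey system of loops decomposes into two shortest paths from $r$ plus possibly one extra edge, which is exactly the structure you extract. Your additional observation that the extra edge contributes no new vertices (so $V(C_i)=V(Q_{2i-1})\cup V(Q_{2i})$) makes explicit the only detail the paper leaves implicit.
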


\section{Alternating partitions}
\label{sec:alter}

Let $G$ be a graph.
By rescaling the edge-lengths we may assume w.l.o.g.~that the minimum distance in $G$ is one.
Let ${\cal P}=\{P_1,\ldots,P_k\}$ be a collection of shortest paths in $G$, with a common end-point $r\in V(G)$.
Let
$X=\bigcup_{i=1}^k V(P_i)$.
We consider the metric space $(X,d_G)$.
We define a collection $\{{\cal C}_i\}_{i=0}^{2+\log \Delta}$, where each ${\cal C}_i$ is a random partition of $X$ into sets of diameter less than $2^i$, and such that for any $i\in \{1,\ldots,1+\log \Delta\}$, ${\cal C}_i$ is a refinement of ${\cal C}_{i+1}$.
We refer to the resulting collection $\{{\cal C}_i\}_i$ as \emph{alternating partitions} for $(X, d_G)$.

Pick a permutation $\sigma\in S_g$, and reals\footnote{It suffices to chose $\alpha$ and $\beta$ within $O(\log n)$ bits of precision.} $\alpha\in [0,1)$, $\beta\in [1,2)$, uniformly, and independently at random.
This is all the randomness that will be used in the construction.

We set ${\cal C}_{2+\log \Delta} = \{X\}$, i.e.~the trivial partition that places all points into the same cluster.
For $i=1+\log\Delta,\ldots,1$, given ${\cal C}_{i+1}$ we define ${\cal C}_i$ by performing two partitioning steps that we describe below (see also figure \ref{fig:partition}).
\begin{description}
\item{\textbf{Horizontal partitioning step:}}
Let $A\in {\cal C}_{i+1}$.
We partition $A$ into clusters $\{A_s\}_{s=1}^k$.
We consider the paths in ${\cal P}$ in the order $P_{\sigma(1)},\ldots,P_{\sigma(k)}$.
For each $s\in \{1,\ldots k\}$, we form the cluster
\[
A_{s} = \left(A \cap N_G(P_{\sigma(s)}, \beta\cdot 2^{i-2})\right) \setminus \bigcup_{t=1}^{s-1} A_{t},
\]
where 
$N_G(P_l, \delta) = \{x\in X : d_G(x, P_l) \leq \delta\}$.
We say that the path $P_{\sigma(s)}$ is the \emph{trunk} of $A_{s}$.
For notational convenience, we also refer to $P_{\sigma(1)}$ as the trunk of the unique cluster $\{X\}$ in the partition ${\cal C}_{2+\log\Delta}$.
We refer to the clusters $\{A_{s}\}_{s=1}^k$ as the \emph{horizontal children} of $A$.

\item{\textbf{Vertical partitioning step:}}
Next, we proceed to partition each horizontal child $A_s$ of $A$ into a set of clusters $\{A_{s,j}\}_{j\in \mathbb{N}_0}$, so that for any integer $j\geq 0$,
\[
A_{s,j} = \{x\in A_s : (j-1+\alpha)\cdot 2^{i-2} \leq d_G(r,x) < (j+\alpha)\cdot 2^{i-2}\}
\]
We refer to the clusters $\{A_{s,j}\}_{j\in \mathbb{N}_0}$ as the \emph{vertical children} of $A_s$.
We also say that $P_{\sigma(s)}$ is the trunk of $A_{s,j}$.
Finally, we add all non-empty clusters $A_{s,j}$ to ${\cal C}_i$.
\end{description}

This concludes the description of the construction
of the alternating partitions $\{{\cal C}_i\}_i$ for $(X,d_G)$.

\begin{figure}
\begin{center}
\scalebox{1.0}{\includegraphics{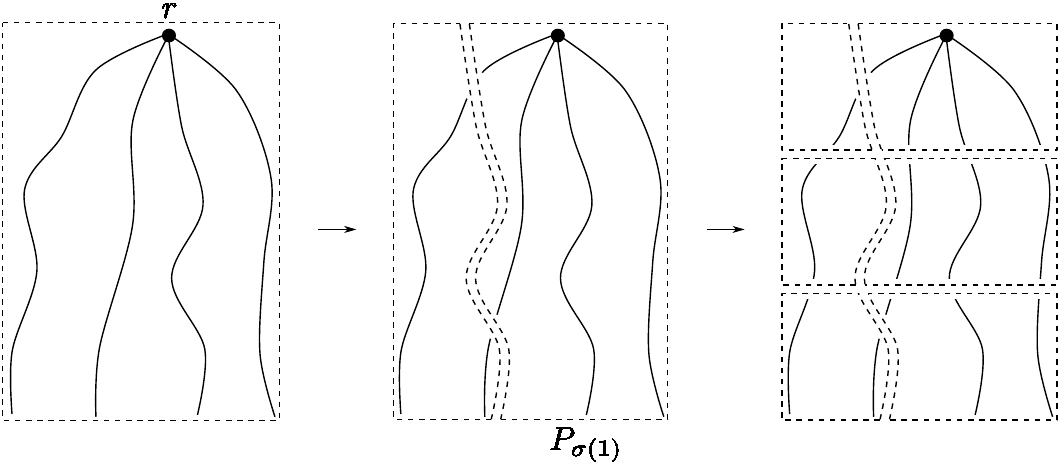}}
\caption{Alternating horizontal and vertical partitioning steps.}\label{fig:partition}
\end{center}
\end{figure}

\begin{lemma}\label{lem:diam}
For any $i\in \{0,\ldots,2+\log\Delta\}$, and $A\in {\cal C}_i$, we have $\diam_G(A) < 2^{i}$.
\end{lemma}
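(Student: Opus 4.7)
The plan is to argue by downward induction on $i$. The base case $i = 2 + \log\Delta$ is immediate, since ${\cal C}_{2+\log\Delta} = \{X\}$ and, because the minimum edge length has been rescaled to $1$, we have $\diam_G(X) \leq \Delta < 2^{2+\log\Delta}$. For the inductive step I will not actually need the inductive hypothesis on the parent; the construction imposes strong enough local constraints to bound $\diam_G(A)$ directly from the definition of the horizontal and vertical partitioning steps.

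Fix $A \in {\cal C}_i$ for $i \leq 1 + \log\Delta$. By construction, $A = A_{s,j}$ for some trunk $P := P_{\sigma(s)}$ and some strip index $j$, so every $x \in A$ satisfies simultaneously
\[
d_G(x,P) \leq \beta \cdot 2^{i-2} \qquad \text{and} \qquad (j-1+\alpha)\cdot 2^{i-2} \leq d_G(r,x) < (j+\alpha)\cdot 2^{i-2}.
\]
Given $x, y \in A$, I will let $p(x), p(y)\in P$ denote closest points to $x$ and $y$ on the trunk, so $d_G(x,p(x)), d_G(y,p(y)) \leq \beta \cdot 2^{i-2}$.

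The key structural fact that I will use is that $P$ is a \emph{shortest path} from $r$: for any two points $p, q \in P$, the subpath of $P$ between them has length exactly $|d_G(r,p) - d_G(r,q)|$, and therefore $d_G(p,q) \leq |d_G(r,p) - d_G(r,q)|$. Combining this with the triangle inequality,
\[
|d_G(r,p(x)) - d_G(r,p(y))| \;\leq\; d_G(x,p(x)) + |d_G(r,x) - d_G(r,y)| + d_G(y,p(y)),
\]
and plugging in the horizontal bound $\beta \cdot 2^{i-2}$ for each detour term together with the strict vertical bound $|d_G(r,x) - d_G(r,y)| < 2^{i-2}$ coming from the strip, I obtain a bound on $d_G(p(x), p(y))$ of the form $(\text{const})\cdot 2^{i-2}$.

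Finally, applying the triangle inequality $d_G(x,y) \leq d_G(x,p(x)) + d_G(p(x),p(y)) + d_G(p(y),y)$ collects all three pieces and yields the desired diameter bound. The main obstacle is therefore not conceptual but quantitative: tracking the constants carefully (exploiting the strict inequality in the strip boundary and the range $\beta \in [1,2)$) to confirm that the sum of the detour and trunk contributions is genuinely less than $2^i$ rather than merely $O(2^i)$.
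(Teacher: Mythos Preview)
Your approach is exactly the paper's: route between $x$ and $y$ through the trunk $P$, bounding the two detours to $P$ by the horizontal constraint and the along-trunk distance by the vertical constraint. The paper phrases this via the subpath $Q=P\cap A$ rather than via nearest-point projections $p(x),p(y)$, but the underlying geometry is the same.

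Your flagged concern about the constants is well founded, and in fact the sharp bound $2^i$ does \emph{not} follow from this argument. Carrying your inequalities through gives
\[
d_G(x,y)\;\le\;2\,d_G\bigl(x,p(x)\bigr)+2\,d_G\bigl(y,p(y)\bigr)+\bigl|d_G(r,x)-d_G(r,y)\bigr|\;<\;(4\beta+1)\cdot 2^{\,i-2},
\]
which, since $\beta\ge 1$, is always at least $5\cdot 2^{\,i-2}>2^i$. One can even build a near-extremal configuration: take two shortest paths $P_1,P_2$ meeting only at $r$, let $\sigma$ put $P_1$ first, take $\beta$ close to $2$ and $\alpha$ just below $\beta-1$; then for $j=2$ a point $u\in P_2$ with $d_G(r,u)$ close to $\beta\cdot 2^{i-2}$ and a point $v\in P_1$ with $d_G(r,v)$ close to $(2+\alpha)\cdot 2^{i-2}$ both land in the same cluster of ${\cal C}_i$, yet $d_G(u,v)=d_G(r,u)+d_G(r,v)$ approaches $(2\beta+1)\cdot 2^{i-2}>2^i$. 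The paper's own proof contains the same slip (it bounds each of $d_G(x,Q)$, $d_G(y,Q)$, $\len(Q)$ by $2^{i-1}$ and then asserts their sum is below $2^i$). What both arguments genuinely establish is $\diam_G(A)=O(2^i)$, and that weaker statement is all that the downstream uses of the lemma---the stem-distance estimate and the level cutoff in the $\Phi_2$ analysis---actually need.
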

\begin{proof}
Let $P_s$ be the trunk of $A$.
Let $Q$ be the subpath of $P_s$ that is contained in $A$.
By the construction of the vertical children we have $\len(Q) < 2^{i-1}$.
Moreover, by the construction of the horizontal children we have that for any $x\in A$, $d_G(x, Q) < 2^{i-1}$.
Therefore, for any $x,y\in A$ we have $d_G(x,y)\leq d_G(x,C)+d_G(y,C)+\len(Q) < 2^i$.
\end{proof}

\section{Embedding the cut graph into a random tree}
\label{sec:tree}

As in the previous section, let $G$ be a graph, and $r\in V(G)$.
Let $P_1,\ldots,P_k$ be shortest paths in $G$ with common end-point $r$, and define $X=\bigcup_{i=1}^{k} V(P_k)$.
We will use the alternating partitions $\{{\cal C}_i\}_{i}$ constructed in the previous section to obtain a stochastic embedding of $(X, d_G)$ into a distribution over trees, with distortion $O(\log k)$.

For any $Y\subseteq V(G)$, let
$\topp(Y) = \min_{v\in Y} d_G(r, v)$,
and
$\bottom(Y) = \max_{v\in Y} d_G(r, v)$.

We proceed by induction on the partitions $\{{\cal C}_i\}_{i\in \mathbb{N}_0}$, starting from ${\cal C}_0$.
For every cluster $A\in {\cal C}_i$ we construct a tree $T_A$ and an injection $f_A:A\to V(T_A)$.
We inductively maintain the following invariant:
\begin{description}

\item{(I)}
For every cluster $A$ with trunk $P_s$, there exists in $T_A$ a copy of the subpath of $P_s$ containing all vertices $v\in V(P_s)$ with $\topp(A)\leq d_G(r,v)\leq \bottom(A)$.
We refer to this path as the \emph{stem} of $A$.
We denote by $r_A$ the vertex in the stem of $A$ which is closest to $r$ in $G$.
We refer to $r_A$ as the \emph{root} of $A$.
\end{description}

By Lemma \ref{lem:diam} we have that every cluster $A\in {\cal C}_0$ has diameter less than the minimum distance in $G$, and therefore contains a single vertex.
We set $T_A$ to be the trivial tree containing that vertex.
The map $f_A$ sends the unique vertex in $A$ to its copy in $T_A$.

Suppose now that we have constructed a tree for every cluster in ${\cal C}_{i-1}$, for some $i\geq 1$. We will show how to obtain a tree for every cluster in ${\cal C}_i$.
Let $A\in {\cal C}_i$, and 
let $\{A_s\}_{s=1}^k$ be the horizontal children of $A$.
For a horizontal child $A_s$, let $\{A_{s,j}\}_{j\in \mathbb{N}_0}$ be its vertical children.
Recall that each such $A_{s,j}$ is a cluster in ${\cal C}_{i-1}$.
Therefore, by the induction hypothesis we have already computed a tree $T_{A_{s,j}}$ for every $A_{s,j}$, and an injection
$f_{A_{s,j}} : A_{s,j}\to V(T_{A_{s,j}})$.
We construct the tree $T_A$ in two steps:

\begin{description}
\item{\textbf{Vertical composition step:}}
We first combine the graphs of the vertical children of each $A_s$, to obtain an intermediate tree $T_{A_s}$.
This is done as follows.
Recall that $P_s$ is the trunk of $A$.
By the inductive invariant (I) we have that for every vertical child $A_{s,j}$, its stem $Q_{s,j}$ is a path in $T_{A_{s,j}}$, and each such $Q_{s,j}$ is a copy of a subpath of $P_s$.
In particular, since for all $i$ we have $\bottom(A_{s,i}) < \topp(A_{s,i+1})$, it follows that the stems of distinct vertical children correspond to disjoint subpaths of $P_s$.
Let $Q_s$ be the subpath containing all vertices $v\in P_s$ with $\topp(A_s)\leq d_G(r,v)\leq \bottom(A_s)$.
Since $A_{s,j}\subseteq A_s$ it follows that $Q_{s,j}\subseteq Q_s$.
We form the tree $T_{A_s}$ by taking a copy of $Q_s$ and identifying for every vertical child $A_{s,j}$, the stem $Q_{s,j}$ in $T_{A_{s,j}}$ with its copy in $Q_s$.
The path $Q_s$ becomes the stem of $A_s$.
The mapping $f_{A_s}$ is defined by composing each $f_{A_{s,j}}$ with the natural inclusion $V(T_{A_{s,j}}) \to V(T_{A_s})$.

\begin{center}
\scalebox{0.9}{\includegraphics{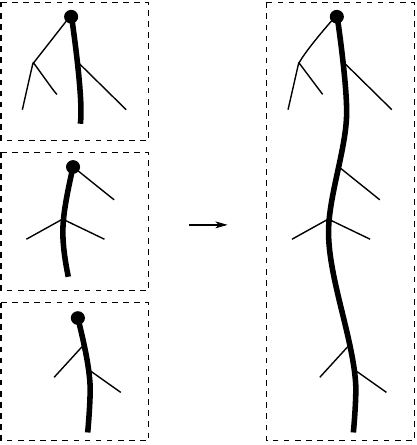}}
\end{center}

\item{\textbf{Horizontal composition step:}}
Next, we combine the trees $T_{A_{s}}$ for all horizontal children $\{A_s\}_{s=1}^k$ of $A$, to obtain $T_A$.
Let $P_t$ be the trunk of $A$.
Observe that there exists a non-empty horizontal child $A_{t}$ of $A$.
For any $l\in \{1,\ldots,k\}$, with $l\neq t$, we connect $r_{A_l}$ with $r_{A_t}$ via an edge of length $2^{i}$.
Let $T_A$ be the resulting tree, and $f_A$ be the induced injection from $\bigcup_{s=1}^k A_{s}$ to $V(T_{A})$.
Note that the root of $A$ is $r_A=r_{A_t}$.

\begin{center}
\scalebox{0.9}{\includegraphics{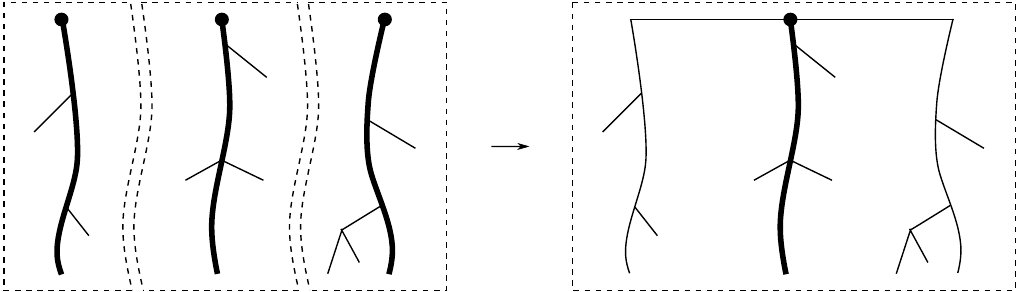}}
\end{center}
\end{description}

It is easy to verity that the inductive invariant (I) is maintained.
Finally, we set $T=T_{\{X\}}$, and $f=f_{\{X\}}$.
This concludes the description of the embedding $f:X\to V(T)$.

\subsection{Bounding the distortion}

It is straight-forward to verify that the mapping $f:X\to V(T)$ is non-contracting, so it remains to bound the expected expansion for every pair of vertices.
We begin with a useful Lemma.

\begin{lemma}\label{lem:stem_H}
Let $i\in \{0,\ldots,2+\log \Delta\}$, let $A\in {\cal C}_i$, and let $Q$ be the stem of $A$.
Then, for any $v\in A$, we have $d_{T}(f(v), Q)\leq 2^{i+2}$.
\end{lemma}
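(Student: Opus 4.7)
The plan is to argue by induction on $i$. In the base case $i=0$, Lemma \ref{lem:diam} forces every cluster $A \in \mathcal{C}_0$ to contain a single vertex (since the minimum distance in $G$ is one), so $T_A$ is a one-vertex tree and the bound $d_T(f(v), Q) \leq 4$ holds trivially.

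For the inductive step I would fix $A \in \mathcal{C}_i$ with trunk $P_t$ and stem $Q$, take $v \in A$, and locate $v$ inside some vertical child $A_{s,j} \in \mathcal{C}_{i-1}$ of some horizontal child $A_s$ of $A$; write $Q_{s,j}$ and $Q_s$ for the stems of $A_{s,j}$ and $A_s$, respectively. The inductive hypothesis at level $i-1$ gives $d_{T_{A_{s,j}}}(f(v), Q_{s,j}) \leq 2^{i+1}$, and the vertical composition step glues $Q_{s,j}$ onto the corresponding subpath of $Q_s$ (both are copies of subpaths of the common trunk $P_s$), so this bound lifts to $d_{T_{A_s}}(f(v), Q_s) \leq 2^{i+1}$.

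I would then split into two cases. If $s = t$, the horizontal child containing $v$ already carries the trunk of $A$: from $A_s \subseteq A$ we get $[\topp(A_s), \bottom(A_s)] \subseteq [\topp(A), \bottom(A)]$, so $Q_s$ is a subpath of $Q$ inside $T_A$ and the bound passes through as $d_T(f(v), Q) \leq 2^{i+1} \leq 2^{i+2}$. If $s \neq t$, the only route in $T_A$ from $f(v) \in T_{A_s}$ to $Q \subseteq T_{A_t}$ must cross the edge of length $2^i$ joining $r_{A_s}$ to $r_{A_t}$, and $r_{A_t}$ already sits on $Q_t \subseteq Q$. Thus $d_T(f(v), Q) \leq d_{T_{A_s}}(f(v), r_{A_s}) + 2^i$, and since $r_{A_s}$ is an endpoint of $Q_s$, $d_{T_{A_s}}(f(v), r_{A_s}) \leq d_{T_{A_s}}(f(v), Q_s) + \len(Q_s)$. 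Lemma \ref{lem:diam} gives $\len(Q_s) \leq \diam(A_s) \leq \diam(A) < 2^i$, so combining yields $d_T(f(v), Q) < 2^{i+1} + 2^i + 2^i = 2^{i+2}$.

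The hard part is Case 2: the three contributions there (the inductive slack $2^{i+1}$, the stem length $< 2^i$, and the newly inserted cross-edge of length $2^i$) conspire to saturate the $2^{i+2}$ bound exactly, so any weaker invariant at the lower level would break the induction; this is what forces the asymmetric $2^{i+2}$ (rather than, say, $2 \cdot 2^i$) in the statement. The only other thing worth checking is that the vertical composition really does realize each $Q_{s,j}$ as a subpath of $Q_s$, which follows from invariant (I) applied to each vertical child together with the disjoint consecutive distance ranges $\bottom(A_{s,j}) < \topp(A_{s,j+1})$ noted in the construction.
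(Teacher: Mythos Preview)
Your argument is correct and follows the same underlying idea as the paper: bound the distance from $f(v)$ to the stem by tracing the chain of clusters $\mathcal{C}_0(v)\subset\cdots\subset\mathcal{C}_i(v)=A$ and controlling the contribution incurred at each level. The paper unrolls this into a single telescoping sum $d_T(f(v),Q)\le\sum_{j=0}^{i-1}\bigl(\len(Q_j)+d_T(r_{A_j},Q_{j+1})\bigr)$ with the per-level bounds $\len(Q_j)<2^{j-2}$ and $d_T(r_{A_j},Q_{j+1})\le 2^{j+1}$, whereas you package the same accumulation as an induction and make the case split ``$s=t$ versus $s\neq t$'' explicit. One small remark: in Case~2 you assert $Q\subseteq T_{A_t}$, which is not quite what invariant~(I) guarantees; what you actually need (and use) is only that $r_{A_t}=r_A\in Q$, so the bound $d_T(f(v),Q)\le d_T(f(v),r_{A_t})$ already follows without locating all of $Q$ inside $T_{A_t}$.
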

\begin{proof}
For any $j\in \{0,\ldots,i\}$, let $A_j ={\cal C}_j(v)$.
We have that $T_{A_0}$ is the tree containing only $f(v)$.
For any $j\in \{0,\ldots,i\}$, let $Q_j$ be the stem of $A_j$.
We have
$\len(Q_j) < 2^{j-2}$,
and
$d_{T}(r_{A_j}, Q_{j+1}) \leq d_T(r_{A_j},r_{A_{j+1}}) = 2^{j+1}$.
Thus
$d_T(f(v),Q) \leq  \sum_{j=0}^{i-1} \left( \len(Q_j) + d_T(r_{A_j},Q_{j+1}) \right) < \sum_{j=0}^{i-1} \left( 2^{j-2} + 2^{j+1} \right) <  2^{i+2}$
\end{proof}

For the remaining of the analysis, we fix two vertices $u,v\in X$.
We wish to bound $\mathbb{E}[d_T(f(u), f(v))]$, where the expectation is taken over the randomness used in constructing the alternating partitions $\{{\cal C}_i\}_i$ (i.e.~$\alpha$, $\beta$, and $\sigma$).

We begin by introducing some notation.
We say that a path $P_s\in {\cal P}$ \emph{settles} $\{u,v\}$ at level $i$ if $u$ and $v$ are in the same cluster in ${\cal C}_{i+1}$, and $P_s$ is the first path w.r.to the ordering $\sigma$ such that $P_s$ is the trunk of at least one of the clusters ${\cal C}_i(u)$, ${\cal C}_i(v)$.
Moreover, we say that $P_s$ \emph{cuts horizontally} $\{u,v\}$ at level $i$ if it settles $\{u,v\}$ at level $i$, and exactly one of the clusters ${\cal C}_i(u)$, ${\cal C}_i(v)$ has $P_s$ as its trunk.

Similarly, we say that $P_s$ \emph{saves} $\{u,v\}$ at level $i$ if $u$ and $v$ are in the same cluster in ${\cal C}_{i+1}$, and 
$P_s$ is the trunk of a cluster in ${\cal C}_i$ containing both $u$ and $v$.
We say that $P_s$ \emph{cuts vertically} $\{u,v\}$ at level $i$ if it saves $\{u,v\}$ at level $i$, there is a horizontal child of a cluster in ${\cal C}_i$ containing both $u$ and $v$, and ${\cal C}_{i+1}(u)\neq {\cal C}_{i+1}(v)$.

Let $\gamma_s^i$, resp.~$\delta_s^i$, be the supremum of $d_T(f(u),f(v))$ when $s$ cuts $\{u,v\}$ at level $i$ horizontally, resp.~vertically, taken over all possible random choices of the algorithm.
That is,
\[
\gamma_s^i = \sup_{\alpha,\beta,\sigma} \left\{d_T(f(u),f(v)) : s \mbox{ cuts horizontally } \{u,v\} \mbox{ at level } i\right\}
\]
\[
\delta_s^i = \sup_{\alpha,\beta,\sigma} \left\{d_T(f(u),f(v)) : s \mbox{ cuts vertically } \{u,v\} \mbox{ at level } i\right\}
\]

Then, we have
\begin{equation}
\mathbb{E}[d_T(f(u),f(v))]  \leq  \Phi_1 + \Phi_2, \label{eq:phi1phi2}
\end{equation}
where
\[
\Phi_1 = \sum_{s=1}^k \sum_{i=0}^{2+\log\Delta} \gamma_s^i \cdot \Pr[P_s \mbox{ cuts horizontally } \{u,v\} \mbox{ at level } i]
\]
\[
\Phi_2 = \sum_{s=1}^k \sum_{i=0}^{2+\log\Delta} \delta_s^i \cdot \Pr[P_s \mbox{ cuts vertically } \{u,v\} \mbox{ at level } i]
\]
We will bound each one of these quantities separately.

\begin{lemma}\label{lem:phi1}
$\Phi_1 \leq O(\log k) \cdot d_G(u,v)$.
\end{lemma}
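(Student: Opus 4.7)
The plan is to adapt the Fakcharoenphol--Rao--Talwar (FRT) analysis, bounding each term $\gamma_s^i \cdot \Pr[E_{s,i}]$ separately and concluding via a harmonic summation over $s$, where $E_{s,i}$ denotes the event that $P_s$ cuts $\{u,v\}$ horizontally at level $i$. Reindex the paths so that $P_1,\ldots,P_k$ appear in non-decreasing order of $d_s := d_G(P_s, \{u,v\})$.

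For the diameter factor: when $E_{s,i}$ occurs, $u$ and $v$ lie in a common cluster $A \in {\cal C}_{i+1}$, so $f(u), f(v) \in T_A$, which is isometrically contained in $T$. Applying Lemma \ref{lem:stem_H} to $A$, both $f(u)$ and $f(v)$ lie within $T$-distance $2^{i+3}$ of the stem $Q_A$, while $\len(Q_A) = O(2^i)$ by the vertical partitioning step. The triangle inequality in $T$ then gives $\gamma_s^i = O(2^i)$.

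For the probability factor: the horizontal step producing ${\cal C}_i$ uses the random radius $R = \beta \cdot 2^{i-2}$ with $\beta$ uniform on $[1,2)$, so $R$ is uniform on $[2^{i-2}, 2^{i-1})$. The event $E_{s,i}$ requires (a) $R \in I_s := [\min(d_G(P_s,u), d_G(P_s,v)),\ \max(d_G(P_s,u), d_G(P_s,v)))$, ensuring that $P_s$'s ball separates $u$ from $v$, and (b) $P_s$ is first in the $\sigma$-order among all $P_{s'}$ whose radius-$R$ ball reaches $\{u,v\}$. Condition (a) has probability $|I_s \cap [2^{i-2}, 2^{i-1})|/2^{i-2}$ over $\beta$. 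For (b), any $R \in I_s$ satisfies $R \geq d_s$, so by the sorted ordering the eligible set has size at least $s$, and exchangeability of $\sigma$ bounds its conditional probability by $1/s$. Independence of $\beta$ and $\sigma$ then yields
\[
\Pr[E_{s,i}] \leq \frac{|I_s \cap [2^{i-2}, 2^{i-1})|}{2^{i-2} \cdot s}.
\]

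Combining, $\gamma_s^i \cdot \Pr[E_{s,i}] = O(|I_s \cap [2^{i-2}, 2^{i-1})|/s)$. Since the dyadic intervals $[2^{i-2}, 2^{i-1})$ tile the positive reals and $|I_s| \leq |d_G(P_s,u) - d_G(P_s,v)| \leq d_G(u,v)$ by triangle inequality, summing over $i$ telescopes to $\sum_i \gamma_s^i \Pr[E_{s,i}] = O(d_G(u,v)/s)$, and summing over $s$ yields $\Phi_1 \leq O(d_G(u,v)) \cdot \sum_{s=1}^k 1/s = O(\log k) \cdot d_G(u,v)$. The main obstacle is step (b): the eligibility set depends on the radius drawn from $\beta$, so one must be careful when decoupling the two randomness sources. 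The clean $1/s$ bound relies on the observation that any cutting radius automatically exceeds $d_s$, forcing the eligibility set to contain the $s$ closest paths; independence of $\beta$ and $\sigma$ then closes the argument.
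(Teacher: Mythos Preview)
Your proof is correct and follows essentially the same FRT-style argument as the paper: bound $\gamma_s^i = O(2^i)$ via Lemma~\ref{lem:stem_H}, bound the cutting probability via the interval $I_s$ together with the $1/s$ factor from the random permutation, and sum harmonically over $s$. The one place you differ is in collapsing the sum over levels: the paper asserts that, since $\beta\in[1,2)$, each $P_s$ can cut $\{u,v\}$ at only a single level $i_s$, whereas you instead use the finer bound $\Pr[\beta\cdot 2^{i-2}\in I_s]\le |I_s\cap[2^{i-2},2^{i-1})|/2^{i-2}$ and telescope via the dyadic tiling $\sum_i |I_s\cap[2^{i-2},2^{i-1})|=|I_s|$. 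Your version is slightly more careful, since $I_s$ can in principle straddle more than one dyadic interval when $d_G(u,v)$ is comparable to $\min I_s$; the tiling argument handles this uniformly without needing the single-level claim.
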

\begin{proof}
Define the interval
\[
I_s = [\min\{d_G(u,P_s),d_G(v,P_s)\}, \max\{d_G(u,P_s),d_G(v,P_s)\})
\]
In order for $P_s$ to cut horizontally $\{u,v\}$ at level $i$ it must be the case that
$\beta \cdot 2^{i-2} \in I_s$.
Since $\beta$ is chosen from $[1,2)$ uniformly at random, it follows by the triangle inequality that this happens with probability at most
\begin{equation}
\Pr[\beta \cdot 2^{i-2} \in I_s]\leq |I_s|/2^{i-2} \leq d_G(u,v)/2^{i-2} \label{eq:ph1_1}
\end{equation}

Assume w.l.o.g.~that $d_G(P_1,\{u,v\}) \leq \ldots \leq d_G(P_k, \{u,v\})$.
Conditioned on the event that $\beta\cdot 2^{i-2}\in I_s$, any of the paths $P_1,\ldots,P_s$ can settle $\{u,v\}$.
Therefore,
\begin{equation}
\Pr[P_s \mbox{ settles } \{u,v\} \mbox{ $|$ } \beta\cdot 2^{i-2}\in I_s] \leq 1/s \label{eq:phi1_2}
\end{equation}

Next we bound $\gamma_s^i$.
Suppose that a path $P_s$ cuts horizontally $\{u,v\}$ at level $i$.
Let $Q$ be the stem of the cluster in ${\cal C}_{i+1}$ containing both $u$ and $v$.
By Lemma \ref{lem:stem_H} we conclude that 
\begin{equation}
\gamma_s^i \leq d_T(f(u), f(v)) \leq d_T(f(u),Q)+\len(Q)+d_T(f(v),Q) \leq 2^{i+5} \label{eq:phi1_3}
\end{equation}

Observe that since $\beta\in [1,2)$, it follows that for every $s\in \{1,\ldots,k\}$, the path $P_s$ can cut $\{u,v\}$ only at a single level $i_s$.
Therefore
\begin{eqnarray*}
\Phi_1 & \leq & \sum_{s=1}^k 2^{i_s+5} \cdot \Pr[P_s \mbox{ settles } \{u,v\} \mbox{ at level $i_s$ and } \beta \cdot 2^{i_s-2} \in I_s]\\
 & \leq & \sum_{i=1}^k 2^{i_s+5} \cdot \Pr[P_s \mbox{ settles } \{u,v\} \mbox{ at level $i_s$ $|$ } \beta \cdot 2^{i_s-2} \in I_s] \cdot \Pr[\beta \cdot 2^{i_s-2} \in I_s]\\
 & \leq & \sum_{s=1}^k 2^{i_s+5} \cdot \frac{1}{s} \cdot \frac{d_G(u,v)}{2^{i_s-2}}\\
 & = & O(\log k) \cdot d_G(u,v)
\end{eqnarray*}
\end{proof}

\begin{lemma}\label{lem:phi2}
$\Phi_2 \leq O(\log k) \cdot d_G(u,v)$.
\end{lemma}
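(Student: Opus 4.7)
The plan is to mirror the proof of Lemma~\ref{lem:phi1}, letting the $\alpha$-randomness play the role that $\beta$ played for horizontal cuts. First I would bound $\delta_s^i$. If $P_s$ cuts $\{u,v\}$ vertically at level $i$, then $u$ and $v$ lie in distinct vertical children $A_{s,j}, A_{s,j'} \in {\cal C}_i$ of the same horizontal child $A_s$, itself contained in some $A \in {\cal C}_{i+1}$. Two applications of Lemma~\ref{lem:stem_H} bound $d_T(f(u), Q_{s,j})$ and $d_T(f(v), Q_{s,j'})$ by $2^{i+2}$; both stems lie in the stem $Q_s$ of $A_s$, which has length at most $\diam(A) < 2^{i+1}$ by Lemma~\ref{lem:diam}. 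Hence $\delta_s^i = O(2^i)$.

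Next I would handle the probabilities. Since $\alpha$ is chosen independently of $\beta$ and $\sigma$, and the vertical cut boundaries at level $i$ are spaced $2^{i-2}$ apart along the distance-from-$r$ axis, the probability (over $\alpha$) that the vertical cut at level $i$ separates $u$ and $v$ is at most $d_G(u,v)/2^{i-2}$, since the separating interval has length $|d_G(r,u) - d_G(r,v)| \le d_G(u,v)$. For the trunk probability, assume without loss of generality $d_G(P_1, \{u,v\}) \le \cdots \le d_G(P_k, \{u,v\})$: for $P_s$ to be the trunk of the horizontal child containing both $u$ and $v$ at level $i$, it must be first in $\sigma$-order among the at-least-$s$ paths within $\beta \cdot 2^{i-2}$ of $u$ or $v$, giving probability at most $1/s$. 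By the independence of $\alpha$ from $(\beta,\sigma)$,
\[
\Pr[P_s \text{ cuts vertically } \{u,v\} \text{ at level } i] \le \frac{1}{s} \cdot \frac{d_G(u,v)}{2^{i-2}}.
\]

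The main obstacle is the summation over $i$. In Lemma~\ref{lem:phi1}, the restriction $\beta \in [1,2)$ combined with $|I_s| \le d_G(u,v)$ pinpoints a single dyadic level $i_s$ for each $s$ and the sum collapses. For vertical cuts, the $\alpha$-cut can a priori fire at multiple scales, but I would exploit two observations: (i) the events $\{P_s \text{ cuts vertically at level } i\}$ across all $(s,i)$ are pairwise disjoint, since $u$ and $v$ get separated at most once, so $\sum_i \Pr[P_s \text{ cuts vertically at } i] \le 1/s$; and (ii) the trunk constraint $\beta \cdot 2^{i-2} \ge \max(d_G(u,P_s), d_G(v,P_s))$ fixes the natural dyadic scale at which $P_s$ can serve as a trunk, so that together with the geometric decay of $d_G(u,v)/2^{i-2}$ for larger $i$ the effective contribution is concentrated on $O(1)$ levels per $s$. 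Combining $\delta_s^i = O(2^i)$ with the per-level probability bound at the effective scale gives an $O(d_G(u,v)/s)$ contribution per $s$, and summing yields
\[
\Phi_2 \le \sum_{s=1}^k \frac{O(d_G(u,v))}{s} = O(\log k) \cdot d_G(u,v).
\]
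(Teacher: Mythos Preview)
Your bound $\delta_s^i = O(2^i)$ is correct, but it is too weak to close the argument, and the two observations you offer to collapse the sum over $i$ do not work. Multiplying $\delta_s^i = O(2^i)$ by the $\alpha$-cut probability $O(d_G(u,v)/2^i)$ gives $O(d_G(u,v))$ per level with \emph{no} decay in $i$; the ``geometric decay'' you invoke in (ii) is exactly cancelled. The constraint $\beta \cdot 2^{i-2} \ge L_s$ is only a \emph{lower} bound on $i$, not a single scale: $P_s$ can be the trunk at every level from roughly $\log L_s$ up to $\log \Delta$, so summing over $i$ introduces an extra $\Theta(\log \Delta)$ factor. Observation (i) does not help either: even granting $\sum_i \Pr[P_s \text{ cuts vertically at } i] \le 1/s$, you would then have to bound $\sum_i \delta_s^i \Pr[\,\cdot\,]$ by $(\max_i \delta_s^i)\cdot \sum_i \Pr[\,\cdot\,]$, and $\max_i \delta_s^i$ can be $\Theta(\Delta)$.

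The paper repairs this by proving a sharper bound on $\delta_s^i$ that is \emph{independent of $i$}. Set $L_s = \max\{d_G(u,P_s), d_G(v,P_s)\}$ and $j_s = 2 + \lceil \log L_s \rceil$. If $P_s$ cuts $\{u,v\}$ vertically at level $i$, then $P_s$ is first in $\sigma$ among the paths capturing $u$ (and $v$) at radius $\beta\cdot 2^{i-2}$; at any smaller level $j \ge j_s$ the set of capturing paths only shrinks while $P_s$ remains in it, so $P_s$ is still the trunk of ${\cal C}_{j_s+1}(u)$ and of ${\cal C}_{j_s+1}(v)$. Tracing the stems of these level-$(j_s{+}1)$ clusters---consecutive subpaths of $P_s$ glued together in $T$ by the vertical composition step---and applying Lemma~\ref{lem:stem_H} at level $j_s{+}1$ gives
\[
\delta_s^i \;\le\; O(2^{j_s}) + |d_G(r,u)-d_G(r,v)| \;=\; O(L_s) + d_G(u,v).
\]
Now the sum over $i$ is genuinely geometric: with $\tau_s = \max\{j_s,\ \lceil \log d_G(u,v)\rceil - 1\}$ (no vertical cut by $P_s$ is possible below $\tau_s$),
\[
\sum_{i \ge \tau_s} \frac{d_G(u,v)}{2^{i-2}} \;=\; O\!\left(\frac{d_G(u,v)}{2^{\tau_s}}\right),
\qquad
\frac{L_s + d_G(u,v)}{2^{\tau_s}} \;=\; O(1),
\]
so each $s$ contributes $O(d_G(u,v)/s)$ and the harmonic sum yields $\Phi_2 = O(\log k)\cdot d_G(u,v)$.
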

\begin{proof}
Define
\[
J_s = [\max\{d_G(u,P_s), d_G(v,P_s)\}, \infty)
\]
\[
R_s^i = \bigcup_{j=0}^\infty \left[j \cdot 2^{i-2} + \min\{d_G(r,u),d_G(r,v)\}, j \cdot 2^{i-2} + \max\{d_G(r,u),d_G(r,v)\}\right)
\]
Denote by ${\cal E}_1^i$ the event $\beta\cdot 2^{i-2}\in J_s$, and by ${\cal E}_2^i$ the event $\alpha\cdot 2^{i-2}\in R_s^i$.
In order for $P_s$ to cut vertically $\{u,v\}$ at level $i$, both ${\cal E}_1^i$ and ${\cal E}_2^i$ must hold.

Assume w.l.o.g.~that $d_G(P_1,\{u,v\}) \leq \ldots \leq d_G(P_k, \{u,v\})$.
Conditioned on ${\cal E}_1^i$, any of $P_1,\ldots,P_s$ can save $\{u,v\}$.
Therefore, 
\begin{equation}
\Pr[P_s \mbox{ saves } \{u,v\} \mbox{ $|$ } {\cal E}_1^i] \leq 1/s \label{eq:phi2_1}
\end{equation}

By the triangle inequality we have
\begin{equation}
\Pr[{\cal E}_2^i] \leq d_G(u,v)/2^{i-2} \label{eq:phi2_2}
\end{equation}

We next upper bound $\delta_s^i$.
Suppose that $P_s$ cuts vertically $\{u,v\}$ at level $i$.
Let
$L_s = \max\{d_G(u,P_s), d_G(v,P_s))\}$,
and 
$j_s = 2 + \left\lceil\log L_s\right\rceil$.
Assume w.l.o.g.~that $d_G(r,u)\leq d_G(r,v)$.
If follows by the construction of $T$ that $P_s$ is the trunk of ${\cal C}_{j_s+1}(u)$ and ${\cal C}_{j_s+1}(v)$.
Therefore, there exist clusters $U_1,\ldots,U_t \in {\cal C}_{j_s+1}$ with $u\in U_1$, $v\in U_t$, such that $P_s$ is the trunk of every $U_i$, and 
such that the stems of $U_1,\ldots,U_t$ are consecutive subpaths of $P_s$.
Let $Q_i$ be the stem of $U_i$.
For any $j\in \{1,\ldots,t-1\}$, the path $Q_j$ is connected to path $Q_{j+1}$ via a path $W_j$, such that $\len(Q_j) + \len(W_j) \leq 2^{j_s-1}$.
By lemma \ref{lem:stem_H} we have
\begin{eqnarray}
\delta_s^i & \leq & d_T(f(u),Q_1) + d_T(f(v),Q_t) + \sum_{j=1}^{t-1} (\len(Q_j)+\len(W_j)) + \len(Q_t) \nonumber \\
 & \leq & 2^{j_s+4} + |d_G(r,u)-d_G(r,v)| + 2\cdot 2^{j_s-1} \nonumber \\
 & \leq & 2^8 \cdot L_s + d_G(u,v) \label{eq:phi2_3}
\end{eqnarray}

Let
$\tau_s = \max\{j_s, \lceil \log d_G(u,v)\rceil - 1\}$.
Observe that for any $i<j_s$, we have $\Pr[{\cal E}_1^i]=0$.
Moreover, if $P_s$ saves $\{u,v\}$ at level $i$, it must be the case that there exists $A\in {\cal C}_{i+1}$ containing both $u$ and $v$, and thus with $\diam_G(A)\geq d_G(u,v)$.
By Lemma \ref{lem:diam} we have $i \geq \lceil\log d_G(u,v)\rceil - 1$.
Therefore, for any $i<\tau_s$ we have $\Pr[P_s \mbox{ cuts } \{u,v\} \mbox{ at level } i]=0$.
Combining with (\ref{eq:phi2_1}), (\ref{eq:phi2_2}), and (\ref{eq:phi2_3}), we have 
\begin{eqnarray*}
\Phi_2 & \leq & \sum_{s=1}^k \sum_{i=\tau_s}^{2 + \log \Delta} \delta_s^i \cdot \Pr[P_s \mbox{ saves } \{u,v\} \mbox{ at level } i \mbox{ $|$ } {\cal E}_1^i \mbox{ and } {\cal E}_2^i] \cdot \Pr[{\cal E}_1^i \mbox{ and } {\cal E}_2^i]\\
 & \leq & \sum_{s=1}^k \frac{1}{s} \sum_{i=\tau_s}^{2 + \log \Delta} \delta_s^i \cdot \frac{d_G(u,v)}{2^{i-2}}\\
 & \leq & d_G(u,v) \cdot \sum_{s=1}^k \frac{1}{s} \sum_{i=\tau_s}^{2 + \log \Delta} \frac{2^8\cdot L_s + d_G(u,v)}{2^{i-2}}\\
 & < & d_G(u,v) \cdot \sum_{s=1}^k \frac{1}{s} \cdot  \frac{2^8\cdot L_s + d_G(u,v)}{2^{\tau_s-3}}\\
 & \leq & O(\log k) \cdot d_G(u,v)
\end{eqnarray*}
\end{proof}

Combining (\ref{eq:phi1phi2}) with lemmas \ref{lem:phi1} and \ref{lem:phi2} we obtain the main result of this section.

\begin{theorem}\label{thm:tree_embed}
Let $G$ be a graph, and let ${\cal P}=\{P_1,\ldots,P_k\}$ be a collection of shortest-paths in $G$, sharing a common end-point.
Then, the metric space $\left(\bigcup_{i=1}^{k} V(P_i), d_G\right)$ admits a stochastic embedding into a distribution over trees with distortion $O(\log k)$.
\end{theorem}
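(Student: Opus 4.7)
The plan is to assemble the theorem directly from the machinery developed in Sections~\ref{sec:alter} and~\ref{sec:tree}. First I would take the alternating partitions $\{{\cal C}_i\}_i$ of $(X,d_G)$ constructed in Section~\ref{sec:alter}, where $X=\bigcup_{i=1}^k V(P_i)$, and feed them into the inductive tree construction of Section~\ref{sec:tree}. The output is a random tree $T=T_{\{X\}}$ together with an injection $f=f_{\{X\}}:X\to V(T)$, where the randomness comes entirely from the permutation $\sigma$ and the scalars $\alpha,\beta$ used to define the partitions.

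Next I would verify non-contraction. This is essentially built into the construction: the vertical composition step glues children along a copy of a subpath of the trunk, respecting its original length, while the horizontal composition step attaches two roots $r_{A_l}$ and $r_{A_t}$ via an edge of length $2^i$, which is at least the $G$-distance between the corresponding clusters by Lemma~\ref{lem:diam}. So $d_T(f(u),f(v)) \geq d_G(u,v)$ for all $u,v \in X$, making $f$ non-contracting.

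For the expansion bound, I would argue as follows. Fix $u,v\in X$. Walking down the partition hierarchy, there is a unique level $i$ at which $u$ and $v$ end up in different clusters, and this happens because some path $P_s\in{\cal P}$ either horizontally or vertically cuts $\{u,v\}$. This yields the decomposition
\[
\mathbb{E}[d_T(f(u),f(v))] \leq \Phi_1 + \Phi_2
\]
of equation~(\ref{eq:phi1phi2}), where $\Phi_1$ and $\Phi_2$ collect the contributions from horizontal and vertical cuts respectively. Lemma~\ref{lem:phi1} bounds $\Phi_1$ by $O(\log k)\cdot d_G(u,v)$, and Lemma~\ref{lem:phi2} bounds $\Phi_2$ by $O(\log k)\cdot d_G(u,v)$. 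Adding these gives the desired $O(\log k)$ expected expansion, which translates to a stochastic embedding of $(X,d_G)$ into a distribution over trees with distortion $O(\log k)$.

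The only subtlety, and what I expect to be the place worth double-checking, is that the bound needed for the theorem is indeed pointwise of the form $\mathbb{E}[d_T(f(u),f(v))]\leq O(\log k)\cdot d_G(u,v)$ for every pair $\{u,v\}$ (as required by the definition of stochastic embedding in Section~\ref{sec:prelims}), rather than just an aggregate guarantee. Since Lemmas~\ref{lem:phi1} and~\ref{lem:phi2} are already stated per pair $\{u,v\}$, the theorem follows by simply combining them with~(\ref{eq:phi1phi2}) and observing that the constructed distribution over $T$ is independent of the pair being analyzed. Everything else, including the ranging of $i$ over $\{0,\ldots,2+\log\Delta\}$ and the fact that $\beta\in[1,2)$ forces each $P_s$ to cut horizontally at a single level $i_s$, is already encoded in the two lemmas.
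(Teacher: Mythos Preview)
Your proposal is correct and matches the paper's own argument essentially line for line: the paper simply states that combining \eqref{eq:phi1phi2} with Lemmas~\ref{lem:phi1} and~\ref{lem:phi2} yields the theorem, after noting that non-contraction is straightforward. Your added explanation of why non-contraction holds and why the bound is per-pair is consistent with the construction and fills in details the paper leaves implicit.
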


\section{Planarization}
\label{sec:peeling}

Let $(X,d)$ be a metric space.
A distribution ${\cal F}$ over partitions of $X$ is called \emph{$(\beta,\Delta)$-Lipschitz} if every partition in the support of ${\cal F}$ has only clusters of diameter at most $\Delta$, and for every $x,y\in X$, 
\[
\Pr_{C\in {\cal F}}[C(x)\neq C(y)] \leq \beta \cdot \frac{d(x,y)}{\Delta}.
\]

We denote by $\beta_{(X,d)}$ the infimum $\beta$ such that for any $\Delta>0$, the metric $(X,d)$ admits a $(\Delta, \beta)$-Lipschitz random partition, and we refer to $\beta_{(X,d)}$ as the \emph{modulus of decomposability} of $(X,d)$.
The following theorem is due to Klein, Plotkin, and Rao \cite{KPR93}, and Rao \cite{Rao99}.
\begin{theorem}[\cite{KPR93}, \cite{Rao99}]\label{thm:KPR}
For any planar graph $G$, we have $\beta_{(V(G), d_G)} = O(1)$.
\end{theorem}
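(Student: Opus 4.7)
The plan is to deduce Theorem~\ref{thm:KPR} from the more general fact, due to Klein, Plotkin, and Rao, that every $K_h$-minor-free graph has modulus of decomposability $O(h^2)$; since planar graphs exclude $K_5$ as a minor, this specializes to $\beta = O(1)$. I would present the argument in three pieces: the construction of the random partition, a routine probability calculation, and a diameter bound that forms the technical heart.

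For the construction, fix the target scale $\Delta$ and set $\delta = \Theta(\Delta/h)$. The partition is built in $h-1$ rounds. In each round, for every current cluster $C$, pick an arbitrary vertex $v_C \in V(C)$, sample $\rho_C$ uniformly from $[0,\delta]$, and refine $C$ into the concentric shells
\[
C_k \;=\; \{u \in C : d_G(v_C,u) \in [\rho_C + k\delta,\, \rho_C + (k+1)\delta)\}, \qquad k = 0,1,2,\dots,
\]
keeping each connected component of each $C_k$ as a cluster for the next round. For the cut probability, a pair $x,y$ lying in a common cluster $C$ at the start of a round is separated in that round only when $\rho_C$ lands in an interval of length at most $d_G(x,y)$, since $|d_G(v_C,x) - d_G(v_C,y)| \leq d_G(x,y)$ by the triangle inequality; this occurs with probability at most $d_G(x,y)/\delta$. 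A union bound over the $h-1$ rounds yields total separation probability $O(h \cdot d_G(x,y)/\delta) = O(h^2 \cdot d_G(x,y)/\Delta)$, as required.

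The main obstacle, and the source of the minor-exclusion hypothesis, is the diameter guarantee: after $h-1$ rounds, every surviving cluster must have $\diam_G$ at most $O(\Delta)$. I would argue by contrapositive. Suppose some final cluster $A$ contained two vertices at distance greater than $c \cdot h\delta$ for a sufficiently large constant $c$. Unwinding the $h-1$ rounds of nested shortest-path shell refinements used to produce $A$, one can inductively identify $h$ mutually adjacent, connected vertex sets in $G$: at each round of the recursion, a geodesic bridging one of the shell boundaries contributes one more branch set, while the nested shortest-path trees used to cut the shells supply edges between distinct branch sets. Contracting these branch sets exhibits a $K_h$ minor of $G$, contradicting the minor-exclusion hypothesis. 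Making this minor-extraction precise---maintaining connectivity of the branch sets across the $h-1$ nested refinements, and verifying the existence of an edge between every pair of branch sets---is the technical content of the KPR Lemma, and the step where all of the difficulty lies. Specializing to $h = 5$ via planarity gives $\beta_{(V(G),d_G)} = O(1)$, proving the theorem.
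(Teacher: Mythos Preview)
The paper does not prove this theorem at all: it is stated as a black box, attributed to \cite{KPR93} and \cite{Rao99}, and used without further justification. Your proposal is a reasonable outline of the original Klein--Plotkin--Rao argument (iterated BFS-shell chopping, with the $K_h$-minor extraction guaranteeing the diameter bound), so in substance you are reproducing the proof from the cited references rather than anything in the present paper. If the intent is simply to supply what the paper omits, your sketch is on the right track, though the minor-building step you flag as ``the technical content of the KPR Lemma'' is indeed where all the work is and would need to be carried out in full to constitute a proof.
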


Let $G$ be a graph, and let $A\subseteq V(G)$.
The \emph{dilation} of $A$ is defined to be
\[
\dil_G(A) = \max_{u,v\in V(G)} \frac{d_{G[A]}(u,v)}{d_G(u,v)}
\]

For a graph $G$ and a graph family ${\cal F}$ we write $G\overset{D}{\leadsto} {\cal F}$ to denote the fact that $G$ stochastically embeds into a distribution over graphs in ${\cal F}$, with distortion $D$.

For two graphs $G,G'$, a \emph{1-sum} of $G$ with $G'$ is a graph obtained by taking two disjoint copies of $G$ and $G'$, and identifying a vertex $v\in V(G)$ with a vertex $v'\in V(G')$.
For a graph family ${\cal X}$, we denote by $\oplus_1{\cal X}$ the closure of ${\cal X}$ under 1-sums.

\begin{lemma}[Peeling Lemma \cite{LS08}]\label{lem:peeling}
Let $G$ be a graph, and $A\subseteq V(G)$.
Let $G'=(V(G),E')$ be a graph with $E'=E(G)\setminus E(G[A])$,
and let $\beta = \beta_{(V, d_{G'})}$ be the corresponding modulus of decomposability.
Then, there exists a graph family ${\cal F}$ such that $G\overset{D}{\leadsto} {\cal F}$, where $D=O(\beta \cdot \dil_G(A))$, and every graph in ${\cal F}$ is a 1-sum of isometric copies of the graphs $G[A]$ and $\left\{G[V\setminus A\cup \{a\}]\right\}_{a\in A}$.
\end{lemma}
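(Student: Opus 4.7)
The proof follows the standard "random partition plus gluing" paradigm. First, I would sample a $(\beta, \Delta)$-Lipschitz random partition $\Pi$ of $(V(G), d_{G'})$ at a scale $\Delta$ to be fixed from the desired distortion trade-off. Second, for each cluster $C \in \Pi$, I would pick an anchor $a_C \in A$ chosen so that (a) every $v \in C$ has a short $G[V\setminus A\cup\{a_C\}]$-path to $a_C$, and (b) $a_C$ is close in $G$ to $C$.

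Then I would construct $H$ by a $1$-sum: take one copy of $G[A]$ as a ``core,'' and for each cluster $C$ glue an isometric copy of $G[V \setminus A \cup \{a_C\}]$ at $a_C$. The embedding $f : V(G) \to V(H)$ sends each $v \in A$ to itself in the core, and each $v \in V \setminus A$ to its copy inside the attachment chosen by $\Pi(v)$.

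Showing $f$ is non-contracting is routine: any $H$-path between $f(u)$ and $f(v)$ either lies inside a single attached copy (which is isometric to an induced subgraph of $G$, hence has distances $\geq d_G$) or routes through $G[A]$ (with $d_{G[A]} \geq d_G$), and the triangle inequality closes the argument. To bound the expected expansion I would split on whether $\Pi$ separates $u$ and $v$. Same cluster: $d_H(f(u),f(v)) \leq d_{G[V \setminus A \cup \{a_C\}]}(u,v)$, which is at most $O(\dil_G(A)) \cdot d_G(u,v)$, because any shortest $G$-path from $u$ to $v$ that strays into $A \setminus \{a_C\}$ may be rerouted through $G[A]$ at multiplicative cost $\dil_G(A)$. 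Different clusters (an event of probability $\leq \beta \cdot d_G(u,v)/\Delta$): $d_H(f(u),f(v))$ is at most $O(\dil_G(A)) \cdot (\Delta + d_G(u,v))$ by routing through the core. Balancing these terms gives the claimed $O(\beta \cdot \dil_G(A))$ distortion.

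\textbf{Main obstacle.} The delicate step is obtaining a single distortion bound uniform over all distance scales from a single-scale partition. I expect this requires exploiting the padding property of the decomposition: that vertices well inside a cluster are separated from the boundary, so for $d_G(u,v) \ll \Delta$ both endpoints lie in the same cluster with probability close to $1$ and the same-cluster detour naturally scales with $d_G(u,v)$ rather than $\Delta$. A second subtlety is the connectivity requirement: each $v \in C$ must actually be reachable from $a_C$ inside the attachment $G[V \setminus A \cup \{a_C\}]$, since from within the attachment there is no other access to $A$; this forces care in anchor selection, and may push one toward a Voronoi-style partition (seeded by vertices of $A$) rather than an arbitrary $(\beta,\Delta)$-Lipschitz one.
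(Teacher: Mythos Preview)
The paper does not prove this lemma at all: it is quoted verbatim from \cite{LS08} and used as a black box in Section~\ref{sec:peeling}. So there is no ``paper's own proof'' to compare against, and your sketch should be judged against the original argument in \cite{LS08} rather than anything here.

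On the merits of your plan: the overall architecture---sample a Lipschitz random partition of $(V,d_{G'})$, anchor each cluster at a point of $A$, and glue copies of $G[V\setminus A\cup\{a\}]$ onto a core $G[A]$---is indeed the shape of the Lee--Sidiropoulos argument. But two steps in your ``Plan'' are not yet justified, and you correctly flag both of them as obstacles without resolving them. First, the same-cluster bound: your sentence ``any shortest $G$-path from $u$ to $v$ that strays into $A\setminus\{a_C\}$ may be rerouted through $G[A]$ at multiplicative cost $\dil_G(A)$'' is where the argument breaks. In $H$ the \emph{only} portal between the attachment for $C$ and the core is the single vertex $a_C$, so the cost you must control is $d_{G[V\setminus A\cup\{a_C\}]}(u,a_C)$, not $d_{G[A]}(\cdot,\cdot)$; nothing you have written bounds this quantity by $O(\dil_G(A))\cdot d_G(u,a_C)$, and in general it is not so bounded for an arbitrary anchor. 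The fix in \cite{LS08} is essentially what you hint at in your last paragraph: one does not use an arbitrary $(\beta,\Delta)$-Lipschitz partition but builds the decomposition so that each $v$ reaches its anchor along a path that stays in $V\setminus A$ except at the endpoint (e.g.\ via nearest-point projection onto $A$), which makes the attachment distance to $a_C$ genuinely comparable to $d_G(v,A)$. Second, the single-scale issue you raise is real: a fixed $\Delta$ cannot by itself give a uniform bound, and padding alone does not rescue the same-cluster case since that case needs a bound of the form $O(\dil_G(A))\cdot d_G(u,v)$ that is scale-free. The actual proof handles this with a more careful choice of how anchors and scales interact. In short, your proposal identifies the right skeleton and the right difficulties, but the two sentences that carry the expansion bound are exactly the places where the nontrivial work of \cite{LS08} lives, and they are not yet done.
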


We will use the following auxiliary Lemma.

\begin{lemma}[Composition Lemma]\label{lem:composition}
Let $G$ be a graph, and let ${\cal X}$, ${\cal Y}$, ${\cal Z}$ be graph families.
If $G\overset{\alpha}{\leadsto} \oplus_1 ({\cal X}\cup {\cal Y})$, and for any $G'\in {\cal X}$, $G'\overset{\beta}{\leadsto} {\cal Z}$, then
$G\overset{\alpha\cdot \beta}{\leadsto} \oplus_1 ({\cal Z}\cup {\cal Y})$.
\end{lemma}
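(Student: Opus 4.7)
The plan is to compose the two stochastic embeddings in the natural way, sampling from the outer distribution and then, independently on each ${\cal X}$-block, sampling from the inner one. Concretely, first sample a pair $(H, \phi)$ from the stochastic embedding $G\overset{\alpha}{\leadsto} \oplus_1({\cal X}\cup {\cal Y})$, where $H$ is a 1-sum of isometric copies $H_1,\ldots,H_m$ of graphs in ${\cal X}\cup {\cal Y}$ and $\phi:V(G)\to V(H)$ is non-contracting. For every block $H_i$ belonging to ${\cal X}$, independently sample a pair $(Z_i,\psi_i)$ from $H_i \overset{\beta}{\leadsto} {\cal Z}$; for every block in ${\cal Y}$, set $Z_i=H_i$ and $\psi_i=\mathrm{id}$. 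Reassemble these into a graph $H'$ by replacing each $H_i$ with $Z_i$ and identifying the 1-sum vertices of $H_i$ with their images under $\psi_i$. By construction $H'\in \oplus_1({\cal Z}\cup {\cal Y})$, and the composed map $f=\psi\circ\phi$, where $\psi$ applies $\psi_i$ inside block $Z_i$, is the candidate embedding.

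The central structural fact is that in any graph built by iterated 1-sums, the block-cut tree is a tree, so for any two vertices $x,y\in V(H)$ there is a unique sequence of blocks $H_{i_1},\ldots,H_{i_t}$ that every shortest $x$-$y$ path must traverse in order, and
\[
d_H(x,y) \;=\; \sum_{j=1}^{t} d_{H_{i_j}}(p_{j-1},p_j),
\]
where $p_0=x$, $p_t=y$, and each intermediate $p_j$ is the cut-vertex shared by $H_{i_j}$ and $H_{i_{j+1}}$. Since the ${\cal X}$-to-${\cal Z}$ embeddings are non-contracting and pin the cut vertices of $H_i$ to their images under $\psi_i$ (which are the cut vertices of $Z_i$ in $H'$), the same additive identity applies to $H'$ along the corresponding block sequence $Z_{i_1},\ldots,Z_{i_t}$. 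Summing block-wise non-contractions gives $d_{H'}(f(u),f(v))\geq d_H(\phi(u),\phi(v))\geq d_G(u,v)$, so $f$ is non-contracting.

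For the distortion, fix $u,v\in V(G)$ and condition on $(H,\phi)$. By the additive decomposition and the conditional independence of the inner embeddings across distinct blocks, linearity of expectation yields
\[
\mathbb{E}\bigl[d_{H'}(f(u),f(v)) \mid H,\phi\bigr] \;\leq\; \sum_{j=1}^{t} \mathbb{E}\bigl[d_{Z_{i_j}}(\psi_{i_j}(p_{j-1}),\psi_{i_j}(p_j))\bigr] \;\leq\; \beta \sum_{j=1}^{t} d_{H_{i_j}}(p_{j-1},p_j) \;=\; \beta\cdot d_H(\phi(u),\phi(v)).
\]
Taking expectation over $(H,\phi)$ and using $\mathbb{E}[d_H(\phi(u),\phi(v))]\leq \alpha\cdot d_G(u,v)$ gives the desired bound $\mathbb{E}[d_{H'}(f(u),f(v))]\leq \alpha\beta\cdot d_G(u,v)$.

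The only subtle point, and what I expect to be the main obstacle to write carefully, is formalizing the block-cut-tree additivity after reassembly: one must check that the identifications in $H'$ use exactly the same cut-vertex pattern as in $H$ (so that the block sequence transfers verbatim) and that ${\cal Y}$-blocks are treated consistently. Once that bookkeeping is in place, both the non-contracting property and the expectation bound follow directly from conditioning on the outer sample and applying linearity of expectation across the blocks.
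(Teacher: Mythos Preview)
Your proposal is correct and is exactly the argument the paper has in mind: the paper's own proof is the one-line sketch ``It follows by direct composition of the two embeddings,'' and what you have written is precisely that composition spelled out, using the block--cut tree additivity of distances in a 1-sum to push non-contraction and the expectation bound through block by block. One minor remark: you invoke independence of the inner samples across blocks, but your actual calculation only uses linearity of expectation, so independence is not needed for the upper bound (it would matter only if you wanted a product-form statement rather than a sum).
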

\begin{proof}[Proof sketch]
It follows by direct composition of the two embeddings.
\end{proof}

\begin{proof}[Proof of Theorem \ref{thm:main}]
Let $G$ be a genus-$g$ graph, drawn on a genus-$g$ surface $S$.
Let $H$ be the subgraph of $G$ given by Lemma \ref{lem:homotopy}.
Recall that there exists $r\in V(G)$ and shortest paths $P_1,\ldots,P_{4g}$ in $G$, having $r$ as a common end-point, such that $V(H) = \bigcup_{i=1}^{4g} V(P_i)$.

Let us write $G_1 = G$, and $X = V(H)$.
By Theorem \ref{thm:tree_embed}, we have
\begin{equation}
(X,d_{G_1}) \overset{O(\log g)}{\leadsto} \trees \label{eq:first_tree}
\end{equation}

After scaling the lengths of the edges in $G_1$, we may assume that the minimum distance is one.
Let $G_2$ be the graph obtained from $G_1$ as follows.
For every edge $\{u,v\}\in E(G_1)$ with $u\in X_1$ and $v\notin X_1$, we replace $\{u,v\}$ with two edges $\{u,w\}$ and $\{w,v\}$, with $\len(\{u,w\})=1/2$, and $\len(\{w,v\})=\len(\{u,v\})-1/2$.
Let $Y=X\cup N_{G_2}(X)$, i.e.~the set $X$, together with all new vertices $w$ introduced above.

Observe that for any $x,y\in X$, we have
$d_{G_1}(x,y) = d_{G_2}(x,y)$.
Therefore by (\ref{eq:first_tree}),
\[
(X, d_{G_2}) \overset{O(\log g)}{\leadsto} \trees
\]
This embedding can be extended to $Y$ as follows.
For every tree $T$ in the support of the distribution, and for every vertex $w\in Y\setminus X$, we attach $w$ to $T$ by adding an edge of length $1/2$ between $w$ and the unique neighbor of $w$ in $X$.
Since we only add leaves to $T$, the new graph is still a tree.
It is straight-forward to verify that the resulting stochastic embedding has distortion $O(\log g)$, and thus
\begin{equation}
(Y, d_{G_2}) \overset{O(\log g)}{\leadsto} \trees \label{eq:G3}
\end{equation}

Let $G_3$ be the graph obtained from $G_2$ by adding an edge $\{u,v\}$ of length $d_G(u,v)$, between every pair of vertices $u,v\in Y$.
Let $E_3' = E(G_3) \setminus E(G_3[Y])$, and $G_3'=(V(G_3), E_3')$.
By cutting the surface $S$ along $H$ we obtain a drawing of $G\setminus H$ into the interior of a disk.
Since every vertex in $Y\setminus X$ is attached to $G\setminus H$ via a single edge, it follows that this planar drawing can be extended to $G_3'$.
Thus, the graph $G_3'$ is planar, and by theorem \ref{thm:KPR} we have $\beta_{(V(G_3'), d_{G_3'})} = O(1)$.

Similarly, we have that for any $y\in Y$, the graph $G_3[V(G_3) \setminus Y \cup \{y\}]$ is planar.

Observe that $\dil_{G_3}(Y)=1$.
By the Peeling Lemma (Lemma \ref{lem:peeling}) we have that $G_3$ can be stochastically embedded with distortion $O(\dil_{G_3}(Y)\cdot \beta_{G_3'}) = O(1)$ into a distribution over graphs $J$, where $J$ is obtained by 1-sums of isometric copies of $G_3[Y_3]$ and $\{G_3[V(G_3) \setminus Y \cup \{y\}]\}_{y\in Y}$.
Each graph in $\{G_3[V(G_3) \setminus Y \cup \{y\}]\}_{y\in Y}$ is planar, and therefore
\begin{equation}
G_3 \overset{O(1)}{\leadsto} \oplus_1(\planar, \{G_3[Y_3]\}) \label{eq:planar_clique}
\end{equation}
By (\ref{eq:G3}), and since the metric $(Y_3, d_{G_3})$ is the same as the metric $(Y_3, d_{G_2})$, we have that
\begin{equation}
G_3[Y_3] \overset{O(\log g)}{\leadsto} \trees \label{eq:Y_trees}
\end{equation}
Note that the 1-sum of two planar graphs is also planar.
Therefore, combining (\ref{eq:planar_clique}), (\ref{eq:Y_trees}) and Lemma \ref{lem:composition}, we obtain
\[
G_3 \overset{O(\log g)}{\leadsto} \oplus_{1}(\trees \cup \planar) = \oplus_1\planar = \planar
\]
Since $G_3$ contains an isometric copy of $G$, this implies
\[
G\overset{O(\log g)}{\leadsto} \planar,
\]
concluding the proof.
\end{proof}

\paragraph{Acknowledgements}
The author wishes to thank Glencora Borradaile, Piotr Indyk, and James R.~Lee for their invaluable contributions to the development of the techniques used in this paper.
He also thanks Dimitrios Thilikos for a question which motivated the problem studied here, as well as Jeff Erickson, and Sariel Har-Peled for several enlightening discussions.

\bibliography{bibfile}
\bibliographystyle{alpha}

\end{document}